\pgfplotsset{compat = newest}
\newtheorem{theorem}{Theorem}
\theoremstyle{plain}
\newtheorem{definition}{Definition}
\newtheorem{example}{Example}
\numberwithin{equation}{section}
\newcommand{\complexes}{\mathbb{C}}
\newcommand{\tr}[1]{\operatorname{tr}{#1}}
\newcommand{\id}[1]{\mathrm{id}_{#1}}
\newcommand{\comm}[2]{[#1,#2]}
\newcommand{\acomm}[2]{\{#1,#2\}}
\newcommand{\dual}{\prime}
\newcommand{\hadj}{*}
\newcommand{\matr}[1]{\mathbb{M}_{#1}(\complexes)}
\newcommand{\matrd}{\matr{d}}
\newcommand{\transpose}{\mathsf{T}}
\DeclareMathOperator{\Tr}{Tr}
\DeclareMathOperator{\der}{\,d}
\newcommand\bigforall{\mbox{\Large $\mathsurround=1pt\forall$}}
\def\<{\langle}
\def\>{\rangle}
\def\oper{{\mathchoice{\rm 1\mskip-4mu l}{\rm 1\mskip-4mu l}
{\rm 1\mskip-4.5mu l}{\rm 1\mskip-5mu l}}}
\begin{document}
\title[Decomposable dynamics on matrix algebras]{Decomposable dynamics on matrix algebras}

\author{Katarzyna Siudzi\'{n}ska\footnote{kasias@umk.pl}}
\affiliation
{Institute of Physics, Faculty of Physics, Astronomy and Informatics, Nicolaus Copernicus University in Toru\'{n}, Grudzi\c{a}dzka 5, 87-100 Toru\'{n}, Poland}

\author{Krzysztof Szczygielski\footnote{krzysztof.szczygielski@ug.edu.pl}}
\affiliation{Institute of Theoretical Physics and Astrophysics, University of Gda\'nsk, Wita Stwosza 57, 80-308 Gda\'nsk, Poland}

\begin{abstract}
We explore a notion of decomposably divisible (D-divisible) quantum evolution families, recently introduced in J.~Phys.~A: Math.~Theor.~\textbf{56} 485202 (2023). Both necessary and sufficient conditions are presented for highly-symmetric qubit and qudit dynamical maps. Through a restructurization of the evolution generators, we encode the decomposable divisibility into the positivity of time-dependent coefficients that multiply generators of D-divisible dynamical maps. This provides an analogy to the CP-divisibility property, which is equivalent to the positivity of decoherence rates that multiply Markovian semigroup generators.
\end{abstract}

\maketitle

\section{Introduction}

Evolution of open quantum systems which are weakly coupled to environment is often approximated by Markovian (i.e.~memoryless) dynamics. However, recent advancements of experimental methods enforce a more nuanced description. Weakly non-Markovian quantum evolution is associated with dynamical maps that are not completely positive (CP) divisible but only positive (P) divisible \cite{Sabrina}. Between P and CP-divisibility, a hierarchy of divisible maps can be distinguished. One example are $k$-divisible dynamical maps, whose definition is related to $k$-positive propagators. Recently, decomposably (D) divisible maps have been established, where propagators are decomposable into a convex combination of completely positive and completely copositive maps \cite{D-div}. Other divisibility properties have also been analyzed in ref. \cite{EB-div} in the form of (eventual) PPT-divisibility and EB-divisibility (positive partial transpose and entanglement breaking, respectively).

An important property of CP-divisible dynamical maps is that they are generated via a Gorini-Kossakowski-Sudarshan-Lindblad (GKSL) generator \cite{GKS,L} with positive decoherence rates. In other words, a Markovian evolution arises from a convex combination of Markovian semigroup generators. In this paper, we generalize this result to D-divisible dynamical maps. That is, our goal is to encode the D-divisibility conditions into the positivity of time-dependent coefficients. Our results include bistochastic qudit evolution that arises from generalized Pauli channels, as well as evolution provided by phase-covariant qubit channels.

The paper is structured as follows. In Section 2, we recall definition and properties of divisible maps, propagators, and generators. Section 3 is dedicated to the simple case of unital qubit maps (Pauli channels), where necessary and sufficient D-divisibility conditions are known. Section 4 generalizes its results to qudit dynamics, where only sufficient conditions can be derived. Section 5 presents an interesting case where D-divisibility conditions are non-linear in the decoherence rates.

\section{D-divisible quantum dynamical maps}

\subsection{Decomposable maps on matrix algebras}

We start with a brief recollection of basic facts regarding decomposable maps. Let $\matrd$ denote an algebra of complex square matrices of size $d \geqslant 2$ and let $\matrd^+$ be the cone of positive semidefinite matrices (for any $a\in\matrd^+$ we simply say $a$ is \emph{positive}, or $a\geqslant 0$). We say that a linear map $\phi : \matrd\to\matrd$ is \emph{positive} if for any $a\geqslant 0$ we have $\phi(a)\geqslant 0$. Map $\phi$ will be further called \emph{completely positive} (CP) if its extension $\id{}\otimes\phi$, acting on tensor product algebra $\matr{d}\otimes\matrd \simeq \matr{d^2}$ via $\phi([a_{ij}]) = [\phi(a_{ij})]$, where $a_{ij}\in\matrd$, is positive. Theorems of Choi and Kraus assert that $\phi$ is CP on $\matrd$ iff its associated \emph{Choi matrix}
\begin{equation}
    C_\phi = \sum_{ij=1}^{d} E_{ij} \otimes \phi(E_{ij})
\end{equation}
is positive in $\matr{d^2}$ ($E_{ij}$ denote matrix units with $1$ in place $(i,j)$ and $0$s elsewhere), which is then true iff $\phi$ admits a \emph{Kraus decomposition}
\begin{equation}
    \phi(a) = \sum_{j} X_j a X_{j}^{\hadj}
\end{equation}
for some (nonunique) family of matrices $\{X_j\}\in\matrd$ (called \emph{Kraus operators}).

Let now $\theta$ denote the usual \emph{transposition} on $\matrd$, i.e.~$\theta(a) = a^\transpose$. A linear map $\xi$ is called \emph{completely copositive} (coCP) if $\theta\circ\xi$ is CP. For every coCP map $\xi$ there exists a CP map $\psi$ s.t.~$\xi = \theta\circ\psi$. Notions of both complete positivity and complete copositivity allow to define yet another, important class of positive maps, so-called \emph{decomposable maps}:
\begin{definition}\label{def:DecomposableMap}
    We say that a linear map $\varphi : \matrd\to\matrd$ is \emph{decomposable} if it can be expressed as a convex combination of CP and coCP map, i.e.~when there exist two CP maps $\phi,\psi$ s.t.
\begin{equation}
    \varphi = \phi + \theta\circ\psi .
\end{equation}
\end{definition}
All decomposable maps are positive. The converse is true in case of endomorphisms on $\matr{2}$ (and also for positive maps $\matr{2}\to\matr{3}$ and vice versa \cite{Woronowicz1976,Stoermer1963}), however fails already for $\matr{3}$ where examples of nondecomposable maps may be constructed. In general case, decomposable maps form a proper convex subcone of cone of all positive maps on $\matrd$. Despite an enormous effort of many mathematicians throughout last few decades, a general characterization of positive maps is still far from being thoroughly understood.

\subsection{Quantum dynamical maps and D-divisibility}

In this article we will be mainly interested in application of decomposable maps, as well as emergence of decomposability, in modeling the dynamics of finite-dimensional physical systems. Let then $\rho_t \in\matrd$ denote a time-dependent \emph{density matrix} of such a system, i.e.~$\rho_t \geqslant 0$, $\tr{\rho_t} = 1$. Recall, that quantum evolution, i.e.~a form of a function $t\mapsto\rho_t$, is described in terms of a \emph{quantum dynamical map} $\Lambda_t : \matrd\to\matrd$ such that $\rho_t = \Lambda_t (\rho_0)$ for some initial density matrix $\rho_0$. We demand $\Lambda_t$ to preserve statistical interpretation of density matrix, i.e.~we want it to be a \emph{positive map} on $\matrd$ and \emph{trace-preserving}, $\tr{\Lambda_t (\rho)} = \tr{\rho}$. We limit our analysis to dynamical maps governed by \emph{Master Equations} with time-local \emph{generators} $L_t$, i.e.~we assume $\Lambda_t$ satisfies a Cauchy problem
\begin{equation}\label{eq:ODEME}
    \frac{\der\Lambda_t}{\der t} = L_t \circ \Lambda_t, \quad \Lambda_0 = \id{}.
\end{equation}
Any such dynamical map is \emph{divisible}: for any $t\geqslant 0$ and any $s \in [0,t]$, there exists a trace-preserving linear map $V_{t,s}$, called a \emph{propagator}, s.t.
\begin{equation}
    \Lambda_t = V_{t,s}\circ\Lambda_s.
\end{equation}
Now, $V_{t,s}$ propagates a solution of the Master Equation \eqref{eq:ODEME} from an earlier time $s$ to a later time $t$. Depending on the properties of its propagator, a dynamical map $\Lambda_t$ is then commonly called
\begin{enumerate}
    \item \emph{P-divisible} if $V_{t,s}$ is a positive map, and
    \item \emph{CP-divisible} if $V_{t,s}$ is completely positive.
\end{enumerate}
The evolution provided by a CP-divisible dynamical map is called \emph{Markovian}.
In case when $\Lambda_t$ is not CP-divisible, namely if $V_{t,s}$ is not a CP map for (at least) some $(t,s)$, the evolution governed by $\Lambda_t$ is called \emph{non-Markovian}. When $\Lambda_t$ is not CP-divisible but still satisfies \eqref{eq:ODEME} for some time-local generator $L_t$, it is sometimes called \emph{weakly non-Markovian} \cite{Sabrina}. Clearly, all CP-divisible dynamical maps are also P-divisible. Also, the CP-divisibility of dynamical maps can be totally inscribed into an internal structure of the generator. Namely, by seminal results of Lindblad \cite{Lindblad1976}, Kossakowski, Gorini and Sudarshan \cite{Gorini1976} the evolution is Markovian if and only if $L_t$ is of the celebrated \emph{standard form}
\begin{equation}
    L_t(\rho) = -i\comm{H_t}{\rho} + \sum_n \left(X_{n,t} \rho X_{n,t}^{\hadj} - \frac{1}{2}\acomm{X_{n,t}^{\hadj}X_{n,t}}{\rho}\right)
\end{equation}
for some Hermitian matrix $H_t$ and a family of matrices $\{X_{n,t}\}_n$ (usually non-unique), or, equivalently,
\begin{equation}\label{eq:LtCPdivisible}
    L_t = -i\comm{H_t}{\cdot} + \phi_t - \frac{1}{2}\acomm{\phi_{t}^{\dual}(\mathbb{I})}{\cdot}
\end{equation}
for Hermitian $H_t$ and some completely positive map $\phi_t$ (also non-unique); here, $\mathbb{I}$ is an identity matrix in $\matrd$ and $\phi^\dual$ denotes a map \emph{dual} to $\phi$, i.e.~satisfying property $\tr{a\phi(b)} = \tr{\phi^\dual (a)b}$ for every $a,b\in\matrd$.
\vskip\baselineskip
Recently, a new subclass of P-divisible dynamical maps was introduced in \cite{Szczygielski2023}:
\begin{definition}
    A dynamical map $\Lambda_t$ is called decomposably divisible or D-divisible iff it is divisible and its propagator $V_{t,s}$ is a trace-preserving and decomposable map on $\matrd$ in the sense of definition \ref{def:DecomposableMap}.
\end{definition}
Since every CP map is decomposable, D-divisibility is a straightforward yet nontrivial generalization of CP-divisibility in the sense that every CP-divisible dynamical map is also D-divisible. The following necessary and sufficient condition for D-divisibility was found in \cite{Szczygielski2023}:
\begin{theorem}[cf.~\cite{Szczygielski2023}]
    A differentiable dynamical map $\Lambda_t$ satisfying Master Equation \eqref{eq:ODEME} with a time-local $L_t$ is D-divisible iff there exists a Hermitian matrix $H_t$ and a decomposable map $\varphi_t$ s.t.
    \begin{equation}\label{eq:LtDdivisible}
        L_t = -i \comm{H_t}{\cdot} + \varphi_t - \frac{1}{2}\acomm{\varphi_{t}^{\dual}(\mathbb{I})}{\cdot} .
    \end{equation}
\end{theorem}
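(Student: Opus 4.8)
The plan is to reduce D-divisibility to an infinitesimal condition on $L_t$ and to exploit that decomposable maps form a cone that is closed both under composition and topologically. Differentiating $\Lambda_t = V_{t,s}\circ\Lambda_s$ and inserting \eqref{eq:ODEME} (on an interval where $\Lambda_s$ is invertible) shows that the propagator solves its own Cauchy problem $\partial_t V_{t,s} = L_t\circ V_{t,s}$, $V_{s,s}=\id{}$, so that $L_t = \lim_{h\to0^+}h^{-1}(V_{t+h,t}-\id{})$. Thus $L_t$ is the tangent, at the identity, to the family of propagators, and D-divisibility is precisely the demand that every $V_{t,s}$ be trace-preserving and decomposable.

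For the sufficiency direction I would first record that the decomposable cone is closed under composition: expanding $(\phi_1+\theta\circ\psi_1)\circ(\phi_2+\theta\circ\psi_2)$ and using that $\theta\circ\chi\circ\theta$ is CP whenever $\chi$ is CP (its Kraus operators being the complex conjugates of those of $\chi$) collapses the four terms into a single CP-plus-coCP sum. Writing $L_t = \varphi_t + Z_t$ with $Z_t(\cdot) = G_t(\cdot)+(\cdot)G_t^\hadj$ and $G_t = -iH_t - \tfrac12\varphi_t^\dual(\mathbb{I})$, the map $e^{hZ_t}(\cdot)=e^{hG_t}(\cdot)e^{hG_t^\hadj}$ is CP, while $e^{h\varphi_t}=\sum_n (h^n/n!)\varphi_t^{\,n}$ is a limit of nonnegative combinations of the decomposable maps $\varphi_t^{\,n}$ and hence decomposable. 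A Lie--Trotter / time-ordered-product expansion then exhibits $V_{t,s}$ as a limit of compositions of such factors and therefore, the cone being closed, as a decomposable map; trace preservation is immediate from $L_t^\dual(\mathbb{I})=0$, which follows at once from \eqref{eq:LtDdivisible}. This gives D-divisibility.

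For the necessity direction I would pass to Choi matrices. A standard algebraic step---identical to the one behind the GKSL form \eqref{eq:LtCPdivisible}---writes any Hermiticity-preserving, trace-annihilating (i.e.\ $\tr{L_t(\cdot)}=0$) generator as $L_t=-i\comm{H_t}{\cdot}+\Phi_t-\tfrac12\acomm{\Phi_t^\dual(\mathbb{I})}{\cdot}$ with $H_t$ Hermitian and $\Phi_t$ merely Hermiticity-preserving; the theorem asserts that D-divisibility allows $\Phi_t=\varphi_t$ to be chosen decomposable. Applying $\id{}\otimes(\cdot)$ to $C_{\id{}}=\sum_{ij}E_{ij}\otimes E_{ij}$ (the unnormalized maximally entangled projector) and differentiating at $t=s$ gives $C_{L_t}=(\id{}\otimes L_t)C_{\id{}}=\lim_{h\to0^+}h^{-1}(C_{V_{t+h,t}}-C_{\id{}})$. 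Each $C_{V_{t+h,t}}$ lies in the decomposable Choi cone $\{R+(\id{}\otimes\theta)(S):R,S\geqslant 0\}$, while $C_{\id{}}$ is a rank-one boundary point of it, so $C_{L_t}$ lies in the tangent cone to this set at $C_{\id{}}$.

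It then remains to convert this tangent-cone membership into the decomposability of $\varphi_t$. Let $Q=\mathbb{I}\otimes\mathbb{I}-\tfrac1d C_{\id{}}$ be the projection onto the orthogonal complement of the maximally entangled vector. Because the commutator and anticommutator terms produce Choi matrices that carry the maximally entangled vector as a one-sided factor, they are annihilated by the two-sided compression with $Q$, so $Q\,C_{L_t}\,Q = Q\,C_{\Phi_t}\,Q$, and the tangent-cone condition forces this compression to be of the form $R+(\id{}\otimes\theta)(S)$ with $R,S\geqslant 0$. One then reconstructs a genuinely decomposable $\varphi_t$ by re-absorbing the free $C_{\id{}}$-block into suitable choices of $H_t$ and $\varphi_t^\dual(\mathbb{I})$, which do not alter the CP/coCP splitting. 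I expect this to be the main obstacle: unlike the purely CP case, where the tangent cone to the positive-semidefinite cone at a rank-one point is transparently $\{M=M^\hadj:QMQ\geqslant 0\}$, here one must characterize the tangent cone to the sum of two cones $\{R+(\id{}\otimes\theta)(S)\}$ at $C_{\id{}}$ and, crucially, show that the compressed decomposable form can always be completed---consistently with the dual-normalization $\varphi_t^\dual(\mathbb{I})$ dictated by trace preservation---to a decomposable map on the full space.
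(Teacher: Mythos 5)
First, a point of reference: the paper itself gives no proof of this theorem --- it is imported (with a ``cf.'') from \cite{Szczygielski2023} --- so your attempt can only be judged on its own merits rather than against an in-paper argument.

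Your sufficiency direction is sound and essentially complete. All the ingredients you invoke are correct: the composition of two decomposable maps is again decomposable (the four cross terms in $(\phi_1+\theta\circ\psi_1)\circ(\phi_2+\theta\circ\psi_2)$ regroup into CP plus coCP because $\theta\circ\chi\circ\theta$ is CP whenever $\chi$ is); the decomposable maps form a \emph{closed} convex cone in finite dimension (the CP cone and the negative of the coCP cone meet only in $0$, by a trace argument, so their sum is closed); hence $e^{h\varphi_t}$ is decomposable, $e^{hZ_t}(\cdot)=e^{hG_t}(\cdot)e^{hG_t^{\hadj}}$ is CP, the product-integral limit stays in the closed cone, and $L_t^{\dual}(\mathbb{I})=0$ gives trace preservation.

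The necessity direction, however, has a genuine gap, located exactly where you flag it. The specific assertion that ``the tangent-cone condition forces $Q\,C_{L_t}\,Q$ to be of the form $R+(\id{}\otimes\theta)(S)$ with $R,S\geqslant 0$'' does not follow from the argument given. Writing $C_{V_{t+h,t}}=R_h+(\id{}\otimes\theta)(S_h)$ with $R_h,S_h\geqslant 0$ and compressing by $Q$ yields $QR_hQ+Q(\id{}\otimes\theta)(S_h)Q$; the first summand is positive, but the second is \emph{not} the partial transpose of a positive matrix, because conjugation by $Q$ does not commute with $\id{}\otimes\theta$ (indeed $(\id{}\otimes\theta)(Q)=\mathbb{I}\otimes\mathbb{I}-\tfrac{1}{d}V$ with $V$ the swap operator, which is not a projection onto the range of $Q$). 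So the compression only lands in the strictly larger cone $\{QRQ+Q(\id{}\otimes\theta)(S)Q : R,S\geqslant 0\}$, and upgrading membership there to the existence of a genuinely decomposable $\varphi_t$ on the full space is the actual content of the theorem. A second unaddressed issue: the splitting $V_{t+h,t}=\Phi_h+\theta\circ\Psi_h$ is highly non-unique, and nothing in your argument guarantees a choice with $\Phi_h\to\id{}$ and $\Psi_h=O(h)$, so you cannot differentiate the two summands separately; one needs either a compactness/selection argument or the dual route (testing $C_{V_{t+h,t}}$ against PPT witnesses $W\geqslant 0$ with $(\id{}\otimes\theta)(W)\geqslant 0$ and $WC_{\id{}}=0$, then closing the resulting duality gap to produce $\varphi_t$). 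As written, the ``only if'' half is a correct plan with its hardest step left open.
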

Clearly, the family of dynamical maps governed by \eqref{eq:LtDdivisible} is a generalization of the CP-divisible case \eqref{eq:LtCPdivisible}.

\section{Qubit evolution -- Pauli channels}

Consider a unital qubit dynamical map represented by the Pauli channel
\begin{equation}
    \Lambda_t(\rho)=\sum_{\alpha=0}^3p_\alpha(t)\sigma_\alpha \rho \sigma_\alpha,
\end{equation}
where $\sigma_0=\mathbb{I}$ and $\sigma_1$, $\sigma_2$, $\sigma_3$ are the usual Pauli matrices. Note that the entire time-dependence is encoded into the probability distribution $p_\alpha(t)$. Now, if $\Lambda_t$ is the solution of master equation (\ref{eq:ODEME}) with the time-local generator
\begin{equation}\label{LPC}
    L_t=\sum_{\alpha=1}^3\gamma_\alpha(t)\mathcal{L}_\alpha,
    \qquad \mathcal{L}_\alpha(\rho)=\frac 12(\sigma_\alpha\rho\sigma_\alpha-\rho).
\end{equation}
then there is the following correspondence between $p_\alpha(t)$ and the decoherence rates $\gamma_\alpha(t)$ \cite{Filip,Filip2},
\begin{equation}
  p_0(t) = \frac 14 \left[1 + \sum_{k=1}^3\lambda_k(t)\right],
\end{equation}
\begin{equation}\
  p_k(t) = \frac 14 \left[1 + 2\lambda_k(t) - \sum_{\ell=1}^3\lambda_\ell(t)\right],\qquad k=1,2,3.
\end{equation}
In the above equations,
\begin{equation}
\lambda_k(t)=\exp\left[-\int_0^t(\gamma_0(\tau)-\gamma_k(\tau))\der\tau\right]
\end{equation}
are the eigenvalues of $\Lambda_t$ to the eigenvectors $\sigma_k$, and $\gamma_0(t)=\gamma_1(t)+\gamma_2(t)+\gamma_3(t)$. 

Divisibility properties of the Pauli dynamical maps are well characterized. An invertible $\Lambda(t)$ is CP-divisible if and only if $\gamma_\alpha(t)\geqslant0$, whereas it is only P-divisible if and only if $\gamma_\alpha(t)+\gamma_\beta(t)\geqslant0$ for all $\alpha\neq\beta$ \cite{Filip,Filip2}. Therefore, the non-Markovianity degree of the corresponding evolution is fully encoded into the decoherence rates $\gamma_\alpha(t)$. Every positive qubit map is decomposable, and hence P-divisibility is equal to D-divisibility. Actually, this weaker property can also be manifested on the level of a generator.

\begin{theorem}\label{Th_PC}
The Pauli dynamical map $\Lambda_t$ is D-divisible if and only if it is generated via a time-local generator of the form
\begin{equation}\label{form1}
L_t=\sum_{\alpha=1}^3j_\alpha(t)G_\alpha,\qquad G_\alpha=\frac 12\sum_{\beta=1}^3\mathcal{L}_\beta-\mathcal{L}_\alpha,
\end{equation}
with $j_\alpha(t)\geqslant0$.
\end{theorem}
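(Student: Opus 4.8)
The plan is to recognize \eqref{form1} as an invertible linear reparametrization of the decoherence rates $\gamma_\alpha(t)$, engineered so that nonnegativity of the new coefficients $j_\alpha(t)$ coincides exactly with the pairwise inequalities characterizing P-divisibility. Because every positive qubit map is decomposable, P-divisibility and D-divisibility agree on $\matr{2}$, so an invertible $\Lambda_t$ is D-divisible iff $\gamma_\alpha(t)+\gamma_\beta(t)\geqslant0$ for all $\alpha\neq\beta$, as recalled above. The entire content thus reduces to matching these three inequalities with the three conditions $j_\alpha(t)\geqslant0$.

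First I would substitute $G_\alpha=\frac12\sum_{\beta=1}^3\mathcal{L}_\beta-\mathcal{L}_\alpha$ into $\sum_\alpha j_\alpha G_\alpha$ and collect the coefficient multiplying each $\mathcal{L}_\gamma$. Setting $J=\sum_{\alpha=1}^3 j_\alpha$, this gives
\begin{equation}
    \sum_{\alpha=1}^3 j_\alpha(t) G_\alpha = \sum_{\gamma=1}^3\left(\tfrac12 J - j_\gamma(t)\right)\mathcal{L}_\gamma,
\end{equation}
so equating with $L_t=\sum_\gamma\gamma_\gamma(t)\mathcal{L}_\gamma$ and using that the $\mathcal{L}_\alpha$ are linearly independent yields, coefficient by coefficient, the linear relation $\gamma_\gamma(t)=\tfrac12 J - j_\gamma(t)$.

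Next I would invert this relation. Adding the equations for two distinct indices $\alpha\neq\beta$ and using $J=j_\alpha+j_\beta+j_\gamma$ gives $\gamma_\alpha(t)+\gamma_\beta(t)=J-(j_\alpha(t)+j_\beta(t))=j_\gamma(t)$, i.e.~$j_\alpha(t)=\gamma_\beta(t)+\gamma_\gamma(t)$ whenever $\{\alpha,\beta,\gamma\}=\{1,2,3\}$. The underlying $3\times3$ matrix $\tfrac12\mathbb{J}-\mathbb{I}$, with $\mathbb{J}$ the all-ones matrix, has determinant $\tfrac12\neq0$, so $(\gamma_1,\gamma_2,\gamma_3)\mapsto(j_1,j_2,j_3)$ is a bijection and the $G_\alpha$ are linearly independent. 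Hence every Pauli generator \eqref{LPC} can be written in the form \eqref{form1} with uniquely determined $j_\alpha(t)$, which makes ``$L_t$ of the form \eqref{form1} with $j_\alpha(t)\geqslant0$'' an unambiguous condition on $L_t$ rather than a mere existence statement.

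Finally I would close both implications simultaneously: the conditions $j_\alpha(t)\geqslant0$ for all $\alpha$ read, via the inversion, $\gamma_\beta(t)+\gamma_\gamma(t)\geqslant0$ for all pairs $\beta\neq\gamma$, which is precisely the P-divisibility (hence D-divisibility) criterion. I do not anticipate any genuine obstacle here; the computation is elementary linear algebra, and the only points deserving care are the index bookkeeping in the inversion and the explicit observation that linear independence of the $G_\alpha$ forces the coefficients $j_\alpha$ to be unique, thereby turning the existential phrasing of the theorem into the sharp sign condition on the rates.
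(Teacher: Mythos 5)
Your proposal is correct and follows essentially the same route as the paper: both rewrite $\sum_\alpha j_\alpha G_\alpha$ in the $\mathcal{L}_\alpha$ basis to get $\gamma_\alpha = \tfrac12\sum_\beta j_\beta - j_\alpha$, invert to $j_\alpha = \sum_\beta \gamma_\beta - \gamma_\alpha = \gamma_\beta + \gamma_\gamma$, and then invoke the equivalence of D-divisibility with P-divisibility (i.e.\ $\gamma_\alpha+\gamma_\beta\geqslant 0$ for $\alpha\neq\beta$) for qubit maps. Your added remarks on the invertibility of the reparametrization and the linear independence of the $G_\alpha$ are sound but only make explicit what the paper leaves implicit.
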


\begin{proof}
Observe that eq. (\ref{form1}) is equivalent to
\begin{equation}
\mathcal{L}(t)=\frac 12 \sum_{\alpha=1}^3\left[\sum_{\beta=1}^3j_\beta(t)
-2j_\alpha(t)\right]\mathcal{L}_\alpha,
\end{equation}
which means that
\begin{equation}
\gamma_\alpha(t)=\frac 12 \sum_{\beta=1}^3j_\beta(t)-j_\alpha(t).
\end{equation}
The inverse relation reads
\begin{equation}\label{nc}
j_\alpha(t)=\sum_{\beta=1}^3\gamma_\beta(t)-\gamma_\alpha(t),
\end{equation}
and hence $j_\alpha(t)\geqslant0$ is necessary and sufficient for D-divisibility of the corresponding $\Lambda(t)$.
\end{proof}

An important remark is that each
\begin{equation}
G_\alpha=\varphi_\alpha-\frac 12 \{\varphi_\alpha^\prime(\mathbb{I}),\cdot\}
=\phi_\alpha-\oper
\end{equation}
is constructed from a completely copositive map 
\begin{equation}
\varphi_1(\rho)=(\sigma_3\rho\sigma_3)^\transpose,\qquad
\varphi_2(\rho)=\rho^\transpose,\qquad
\varphi_3(\rho)=(\sigma_1\rho\sigma_1)^\transpose.
\end{equation}
Therefore, Theorem \ref{Th_PC} provides an analogue of the CP-divisibility condition $\gamma_\alpha(t)\geqslant 0$ for D-divisibility. As we show in the next example, this property carries over to qudit evolution.

\section{Generalized Pauli channels}

For a qudit evolution, we consider a unital dynamical map in the form of the generalized Pauli channel \cite{Ruskai,mub_final}
\begin{equation}\label{GPC}
\Lambda_t=\frac{dp_0(t)-1}{d-1}\oper+\frac{d}{d-1}\sum_{\alpha=1}^{d+1}p_\alpha(t)\Phi_\alpha.
\end{equation}
In the above formula, $p_\alpha(t)$ are probabilities, and $\oper$ is the identity map. The $d+1$ channels
\begin{equation}
\Phi_\alpha(X)=\sum_{k=0}^{d-1}P_k^{(\alpha)}XP_k^{(\alpha)}
\end{equation}
are defined via rank-1 projectors $P_k^{(\alpha)}:=|\psi_k^{(\alpha)}\>\<\psi_k^{(\alpha)}|$ onto the vectors that form mutually unbiased bases (MUBs) $\mathcal{B}_\alpha=\{\psi_0^{(\alpha)},\dots,\psi_{d-1}^{(\alpha)}\}$. 
Recall that two bases $\mathcal{B}_\alpha$, $\mathcal{B}_\beta$ are mutually unbiased if
\begin{equation}
\big\<\psi_k^{(\alpha)}\big|\psi_l^{(\alpha)}\big\>=\delta_{kl},\qquad
\big|\big\<\psi_k^{(\alpha)}\big|\psi_l^{(\beta)}\big\>\big|^2=
\frac 1d,\quad\alpha\neq\beta.
\end{equation}
Note that the maximal number of $d+1$ MUBs is known to exist only in prime and power prime dimensions $d$ \cite{Wootters,MAX}. Therefore, the generalized Pauli channels from eq. (\ref{GPC}) can be constructed only in those dimensions.

Now, assume that the generalized Pauli dynamical map $\Lambda_t$ solves the master equation with a time-local generator
\begin{equation}\label{L_GPC}
L_t=\sum_{\alpha=1}^{d+1}\gamma_\alpha(t)\mathcal{L}_\alpha,\qquad \mathcal{L}_\alpha=\Phi_\alpha-\oper.
\end{equation}
Then, the probability distribution which defines $\Lambda_t$ is given by
\begin{align}
p_0(t)&=\frac{1}{d^2} \left[1+(d-1)\sum_{\beta=1}^{d+1}\lambda_\beta(t)\right],\\
p_\alpha(t)&=\frac{d-1}{d^2} \left[1+d\lambda_\alpha(t)-\sum_{\beta=1}^{d+1}\lambda_\beta(t)\right]
\end{align}
for $\alpha=1,\, ... \, ,d+1$. The eigenvalues $\lambda_\alpha(t)$ of the generalized Pauli map to the unitary eigenvectors
\begin{equation}\label{U}
U_\alpha^k=\sum_{\ell=0}^{d-1}\omega^{k\ell}P_\ell^{(\alpha)},\qquad \omega=e^{2\pi i/d},
\end{equation}
are $(d-1)$-times degenerated, real-valued functions. They are related to the decoherence rates $\gamma_\alpha(t)$ via
\begin{equation}
\lambda_\alpha(t)=\exp\left[-\int_0^t\left(\gamma_0(\tau)-\gamma_\alpha(\tau)\right)\der\tau\right],
\end{equation}
where $\gamma_0(t)=\sum_{\beta=1}^{d+1}\gamma_\beta(t)$. Recall that the positivity of the decoherence rates $\gamma_\alpha(t)\geqslant0$ at all times $t\geqslant0$ is equivalent to Markovianity of the corresponding evolution. However, for the qudit scenario, there are no general necessary and sufficient conditions where it comes to D-divisibility of $\Lambda_t$.

\begin{theorem}\label{thm:GeneralizedPauli}
The generalized Pauli dynamical map $\Lambda_t$ generated by $L_t$ of form \eqref{L_GPC} is D-divisible if
   \begin{equation}\label{eq:GPCbetaCondition}
       \sum_{\beta=1}^{d+1}\gamma_\beta(t)\geqslant (d-1)\max_\alpha\gamma_\alpha(t),
   \end{equation}
where the maximization is performed over all $\alpha=1,\, ... \, ,d+1$.
\end{theorem}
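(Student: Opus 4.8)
The plan is to mirror the restructurization behind Theorem~\ref{Th_PC}: rewrite the generator \eqref{L_GPC} as a \emph{nonnegative} combination of elementary generators $G_\alpha$, each of which already has the decomposable form appearing in \eqref{eq:LtDdivisible}, and then close the argument by convexity of the cone of decomposable maps. Concretely, I would introduce the direct analogue of \eqref{form1},
\[
G_\alpha = \frac{1}{2}\sum_{\beta=1}^{d+1}\mathcal{L}_\beta - \mathcal{L}_\alpha, \qquad \alpha = 1,\dots,d+1,
\]
and look for coefficients $j_\alpha(t)\geqslant 0$ with $L_t = \sum_\alpha j_\alpha(t)\,G_\alpha$. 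Since the $\mathcal{L}_\beta$ are linearly independent, expanding $\sum_\alpha j_\alpha G_\alpha$ and matching the coefficient of each $\mathcal{L}_\alpha$ gives $\gamma_\alpha = \tfrac12 J - j_\alpha$ with $J=\sum_\beta j_\beta$; summing over $\alpha$ yields $J = \tfrac{2}{d-1}\sum_\beta\gamma_\beta$, hence the inverse relation $j_\alpha(t) = \tfrac{1}{d-1}\sum_{\beta=1}^{d+1}\gamma_\beta(t) - \gamma_\alpha(t)$. The condition $j_\alpha(t)\geqslant 0$ for \emph{every} $\alpha$ is then precisely $\sum_\beta\gamma_\beta(t)\geqslant (d-1)\max_\alpha\gamma_\alpha(t)$, i.e.\ the hypothesis \eqref{eq:GPCbetaCondition}, and at $d=2$ this recovers Theorem~\ref{Th_PC}.

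Next I would verify that each $G_\alpha$ is a legitimate D-divisible generator, i.e.\ that $G_\alpha = \varphi_\alpha - \tfrac12\acomm{\varphi_\alpha^\dual(\mathbb{I})}{\cdot}$ for some \emph{decomposable} $\varphi_\alpha$. Using the MUB completeness identity $\sum_{\beta=1}^{d+1}\Phi_\beta = \oper + d\,\Phi_{\ast}$, where $\Phi_{\ast}(X)=\tfrac1d\tr{X}\,\mathbb{I}$, one gets $\tfrac12\sum_\beta\mathcal{L}_\beta = \tfrac{d}{2}(\Phi_{\ast} - \oper)$ and therefore $G_\alpha = \tfrac{d}{2}\Phi_{\ast} - \Phi_\alpha + (1-\tfrac d2)\oper$. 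Because $\Phi_{\ast}$, $\Phi_\alpha$ and $\oper$ are all unital and self-dual, for \emph{any} scalar $x$ the candidate $\varphi_\alpha = \tfrac{d}{2}\Phi_{\ast} - \Phi_\alpha + x\,\oper$ obeys $\varphi_\alpha^\dual(\mathbb{I}) = (\tfrac d2 - 1 + x)\mathbb{I}$, so that $\varphi_\alpha - \tfrac12\acomm{\varphi_\alpha^\dual(\mathbb{I})}{\cdot} = G_\alpha$ identically; the free parameter $x$ is just the gauge freedom $\varphi\mapsto \varphi + \tfrac12\acomm{K}{\cdot}$ (Hermitian $K$) that leaves \eqref{eq:LtDdivisible} invariant. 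Granting that some $x$ renders $\varphi_\alpha$ decomposable, the proof closes by convexity: with $j_\alpha\geqslant 0$ the map $\varphi_t = \sum_\alpha j_\alpha\varphi_\alpha$ is again decomposable, $L_t = \varphi_t - \tfrac12\acomm{\varphi_t^\dual(\mathbb{I})}{\cdot}$, and the generator characterization \eqref{eq:LtDdivisible} then certifies D-divisibility.

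I expect the decomposability of $\varphi_\alpha$ to be the main obstacle, and I would attack it through the orthogonal-projector superoperator algebra $\Pi_0 = \Phi_{\ast}$, $\Pi_\beta = \Phi_\beta - \Phi_{\ast}$. These satisfy $\Pi_\beta\Pi_\gamma = \delta_{\beta\gamma}\Pi_\beta$ and $\sum_{\beta=0}^{d+1}\Pi_\beta = \oper$, which follow from the MUB relations $\Phi_\beta^2 = \Phi_\beta$ and $\Phi_\beta\Phi_\gamma = \Phi_{\ast}$ for $\beta\neq\gamma$. In these coordinates $\varphi_\alpha$ acts as the scalar $\tfrac d2 - 1 + x$ on $\Pi_0$, as $x-1$ on $\Pi_\alpha$, and as $x$ on the remaining $\Pi_\gamma$, so decomposability reduces to cone membership of a single Choi matrix assembled from these three scalars, equivalently to exhibiting a value of $x$ for which $\theta\circ\varphi_\alpha$ is completely positive (the qudit counterpart of the qubit situation, where $\varphi_\alpha$ is literally a scaled transposition).

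The genuine difficulty is that $\theta$ does \emph{not} preserve the $\Pi$-algebra: it carries each MUB dephasing $\Phi_\alpha$ to the dephasing in the complex-conjugate basis, $\theta\circ\Phi_\alpha\circ\theta = \overline{\Phi}_\alpha$, which in general belongs to a different complete MUB set. Hence the coCP (or Choi-positivity) computation cannot be performed abstractly from the projector relations alone but requires the explicit Heisenberg–Weyl structure of the MUBs, so as to track how the conjugate bases relate to the original $\{\mathcal{B}_\alpha\}$ and thereby pin down an admissible $x$. Establishing this last step — that for each $\alpha$ there is an $x$ making $\varphi_\alpha$ completely copositive — is where I would concentrate the technical effort; everything else is the linear bookkeeping summarised above.
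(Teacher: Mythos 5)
Your overall strategy coincides with the paper's: rewrite $L_t$ as a nonnegative combination $\sum_\alpha j_\alpha(t) G_\alpha$ of elementary generators and close by convexity of the decomposable cone. The linear bookkeeping is correct and matches the paper up to an overall normalization of $G_\alpha$ (the paper uses $G_\alpha=\sum_{\beta\neq\alpha}\mathcal{L}_\beta-\mathcal{L}_\alpha$, i.e.\ twice your choice, so its $j_\alpha=\frac12\bigl[\frac{1}{d-1}\sum_\beta\gamma_\beta-\gamma_\alpha\bigr]$); in both cases $j_\alpha(t)\geqslant0$ for all $\alpha$ is exactly condition \eqref{eq:GPCbetaCondition}, and the $d=2$ case reduces to Theorem~\ref{Th_PC}.

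The genuine gap is the step you yourself flag and defer: you never establish that each $G_\alpha$ is the generator of a D-divisible evolution, i.e.\ that $\varphi_\alpha=\frac d2\Phi_\ast-\Phi_\alpha+x\,\oper$ is decomposable for some admissible $x$. This is the entire nontrivial content of the theorem --- the reparametrization is elementary, and without this lemma the convexity argument has nothing to act on. The paper does not prove it either; it imports it from ref.~\cite{SICMUB_channels}, which shows that generators $\gamma(t)\mathcal{L}_\alpha+\widetilde{\gamma}(t)\sum_{\beta\neq\alpha}\mathcal{L}_\beta$ yield D-divisible maps whenever $\widetilde{\gamma}(t)\geqslant0$ and $\gamma(t)+\widetilde{\gamma}(t)\geqslant0$, and then takes the borderline case $\gamma=-1$, $\widetilde{\gamma}=1$, which is precisely your $G_\alpha$ (up to the factor $2$). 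Your diagnosis of why the lemma is hard is accurate --- the projector superoperator algebra $\Pi_0=\Phi_\ast$, $\Pi_\beta=\Phi_\beta-\Phi_\ast$ diagonalizes $\varphi_\alpha$, but transposition maps each MUB dephasing to the dephasing in the conjugate basis, so coCP-ness cannot be read off from the abstract $\Pi$-relations and requires the explicit MUB construction. As written, though, this remains an announced plan of attack rather than a proof, so the argument is incomplete at its only essential point; to finish it you would either have to carry out that computation or explicitly invoke the result of \cite{SICMUB_channels} as the paper does.
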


\begin{proof}
In ref. \cite{SICMUB_channels}, it is proved that the generalized Pauli dynamical maps generated by
\begin{equation}
G_\alpha=\gamma(t)\mathcal{L}_\alpha+\widetilde{\gamma}(t)\sum_{\beta\neq\alpha}
\mathcal{L}_\beta
\end{equation}
are D-divisible if $\widetilde{\gamma}(t)\geqslant0$ and $\gamma(t)+\widetilde{\gamma}(t)\geqslant0$. Now, recall that if two operators are decomposable, then their convex combination is also decomposable. Hence, the time-local generator constructed as follows,
\begin{equation}
L_t=\sum_{\alpha=1}^{d+1}j_\alpha(t)G_\alpha,\qquad j_\alpha(t)\geqslant0,
\end{equation}
leads to D-divisible $\Lambda_t$. In the definition of $G_\alpha$, let us choose the borderline case with $\widetilde{\gamma}(t)=1$ and $\gamma(t)=-1$. Then, by comparing the above formula with eq. (\ref{L_GPC}), we recover the correspondence between $\gamma_\alpha(t)$ and $j_\alpha(t)$;
\begin{equation}
j_\alpha(t)=\frac 12 \left[\frac{1}{d-1}\sum_{\beta=1}^{d+1}\gamma_\beta(t)-\gamma_\alpha(t)\right].
\end{equation}
Finally, the condition for positivity of $j_\alpha(t)$ recovers the sufficient D-divisibility condition from eq. (\ref{eq:GPCbetaCondition}).
\end{proof}

Observe that, for $d=2$, eq. (\ref{eq:GPCbetaCondition}) recovers the necessary and sufficient D-divisibility conditions from eq. (\ref{nc}).

Due to the lack of general construction methods for positive maps, divisibility criteria for the qudit maps are not fully characterized. For the generalized Pauli channels, there are only known P-divisibility conditions that are either necessary \cite{mub_final}
\begin{equation}\label{Pnecc}
\sum_{\beta=1}^{d+1}\gamma_\beta(t)-\gamma_\alpha(t)\geqslant0
\end{equation}
or sufficient \cite{mub_final,ICQC}
\begin{equation}\label{Psuff}
\bigforall_{\beta\neq\alpha}\quad
\gamma_\alpha+(d-1)\gamma_\beta\geqslant0,
\end{equation}
\begin{equation}\label{PCOND}
\bigforall_{\gamma_\beta(t)\geqslant0}\quad
\gamma_\beta(t)\geqslant -\frac{d+2(k-1)}{d-2(k-1)}\min_\alpha\gamma_\alpha(t),
\end{equation}
where $k\leqslant (d+1)/2$ is the number of negative decoherence rates at time $t\geqslant0$.

\begin{example}
    Take the example from ref. \cite{SICMUB_channels}, where
    \begin{equation}
L_t=\sum_{\alpha=1}^{d+1}\gamma_\alpha(t)\mathcal{L}_\alpha
    \end{equation}
    with $\gamma_\alpha(t)=\gamma$ and $\gamma_\beta(t)=\widetilde{\gamma}$ for all $\beta\neq\alpha$. These rates satisfy
    \begin{itemize}
        \item the necessary P-divisibility condition (\ref{Pnecc}) if and only if
        \begin{equation}
            \sum_{\beta=1}^{d+1}\gamma_\beta-\gamma_\alpha\geqslant0\qquad
            \Longleftrightarrow\qquad \widetilde{\gamma}\geqslant0\qquad\wedge\qquad \gamma+(d-1)\widetilde{\gamma}\geqslant0,
        \end{equation}
        \item the sufficient P-divisibility condition (\ref{Psuff}) if and only if
        \begin{equation}
            \gamma_\alpha+(d-1)\gamma_\beta\geqslant0\qquad
            \Longleftrightarrow\qquad \gamma+(d-1)\widetilde{\gamma}\geqslant0 \qquad\wedge\qquad \widetilde{\gamma}+(d-1)\gamma\geqslant0,
        \end{equation}
        \item the more inclusive P-divisibility condition (\ref{PCOND}) if and only if $\gamma<0$, $\widetilde{\gamma}\geqslant0$, and
        \begin{equation}
            \gamma_\alpha+\gamma_\beta\geqslant0\qquad
            \Longleftrightarrow\qquad \widetilde{\gamma}\geqslant0 \qquad\wedge\qquad \gamma+\widetilde{\gamma}\geqslant0,
        \end{equation}
        \item the sufficient D-divisibility conditions (ref. \cite{SICMUB_channels}) if and only if
        \begin{equation}\label{DD}
            \widetilde{\gamma}\geqslant0 \qquad\wedge\qquad \gamma+\widetilde{\gamma}\geqslant0,
        \end{equation}            
        \item the sufficient D-divisibility conditions (\ref{eq:GPCbetaCondition}) if and only if
        \begin{equation}
            \sum_{\beta=1}^{d+1}\gamma_\beta-(d-1)\gamma_\alpha\geqslant0\qquad
            \Longleftrightarrow\qquad d\widetilde{\gamma}-(d-2)\gamma\geqslant0 \qquad\wedge\qquad \gamma+\widetilde{\gamma}\geqslant0,
        \end{equation}
        which for $\gamma<0$ and $\widetilde{\gamma}\geqslant0$ reduce to eq. (\ref{DD}).
    \end{itemize}
    Therefore, for this class of generators, the sufficient D-divisibility conditions are just as strong as the most inclusive known P-divisibility conditions, provided that there is exactly one negative decoherence rate.
\end{example}

\begin{example}
    Now, assume that we have $k$ negative decoherence rates equal to $\gamma$ and $d+1-k$ positive rates equal to $\widetilde{\gamma}$. In this case, the D-divisibility conditions from eq. (\ref{eq:GPCbetaCondition}) are equivalent to
    \begin{equation}
        (d+1-k)\widetilde{\gamma}\geqslant (d-1-k)\gamma\qquad\wedge\qquad k\gamma\geqslant (k-2)\widetilde{\gamma}.
    \end{equation}
This means that the only admissible choice is $k=1$.
Meanwhile, for P-divisibility, it is enough that
\begin{equation}
\widetilde{\gamma}\geqslant\frac{d+(k-1)}{d-(k-1)}|\gamma|,
\end{equation}
which has solutions for $1\leqslant k\leqslant (d+1)/2$.
\end{example}

\begin{example}
In $d=4$, consider a special case with two doubly degenerate decoherence rates: $\gamma_2(t)=\gamma_3(t)$ and $\gamma_4(t)=\gamma_5(t)$. Without a loss of generality, we assume that $\gamma_2(t)\geqslant\gamma_4(t)$. If there is only $k=1$ negative decoherence rate $\gamma_1(t)<0$, P-divisibility of $\Lambda_t$ is guaranteed by
\begin{equation}\label{ex1}
\gamma_2(t)\geqslant|\gamma_1(t)|.
\end{equation}
On the other hand, $\Lambda_t$ is D-divisible if
\begin{equation}
\gamma_4(t)\geqslant\frac{\gamma_2(t)+|\gamma_1(t)|}{2},
\end{equation}
which together with $\gamma_2(t)\geqslant\gamma_4(t)$ reproduces eq. (\ref{ex1}). Therefore, there are indeed less D-divisible than P-divisible $\Lambda_t$ even for a single negative decoherence rate.
\end{example}

\section{Phase-covariant channels}

Consider phase-covariant qubit maps $\Phi$, for which the following covariance property holds,
\begin{equation}\label{cov_def}
\bigforall_{X \in \mathcal{B}(\mathcal{H}),\,\phi\in\mathbb{R}}\quad\Phi(e^{-i\sigma_3\phi}X e^{i\sigma_3\phi}) = e^{-i\sigma_3\phi}\Phi(X)e^{i\sigma_3\phi},
\end{equation}
with the unitary evolution $\mathcal{U}(\theta)(\rho)=\exp(-i\sigma_3\theta)\rho\exp(i\sigma_3\theta)$
corresponding to rotations on the Bloch ball. The phase-covariant channels describe physical evolution in the presence of energy absorption, energy emission, and pure dephasing \cite{phase-cov-PRL,phase-cov,PC1,PC3}. Their most general form is $\Phi=\Lambda\mathcal{U}(\theta)$, where \cite{phase-cov,phase-cov-PRL}
\begin{equation}
\Lambda[\rho]=\frac 12 \left[(\mathbb{I}+\lambda_{\ast}\sigma_3)\Tr\rho
+\lambda_1\sigma_1\Tr(\rho\sigma_1)+\lambda_1\sigma_2\Tr(\rho\sigma_2)
+\lambda_3\sigma_3\Tr(\rho\sigma_3)\right].
\end{equation}
Whenever $\lambda_\ast\neq 0$, $\Lambda$ is non-unital. Otherwise, this is the Pauli channel with a two-times degenerate eigenvalue $\lambda_1$. The stationary state (i.e., the state preserved by $\Lambda$)
\begin{equation}
\rho_\ast=\frac 12 \left[\mathbb{I}+\frac{\lambda_\ast}{1-\lambda_3}\sigma_3\right]
\end{equation}
depends on both $\lambda_\ast$ and the eigenvalue $\lambda_3$. The necessary and sufficient complete positivity conditions are given by \cite{phase-cov}
\begin{equation}
|\lambda_3|+|\lambda_\ast|\leqslant 1,\qquad 4\lambda_1^2+\lambda_\ast^2\leqslant (1+\lambda_3)^2.
\end{equation}
Now, assume that the corresponding dynamical map $\Lambda_t$, where the entire time-dependence is encoded in $\lambda_1(t)$, $\lambda_3(t)$, and $\lambda_\ast(t)$, is the solution of
\begin{equation}\label{LPPC}
L_t=\gamma_+(t)\mathcal{L}_++\gamma_-(t)\mathcal{L}_-
+\gamma_3(t)\mathcal{L}_3
\end{equation}
where
\begin{equation}
\mathcal{L}_\pm[\rho]=\sigma_\pm \rho\sigma_\mp -\frac 12 \{\sigma_\mp\sigma_\pm,\rho\},
\qquad \mathcal{L}_3[\rho]=\frac 14(\sigma_3\rho\sigma_3-\rho),
\end{equation}
and $\sigma_\pm=(\sigma_1\pm i\sigma_2)/2$ are the raising and lowering operators. Then, one finds \cite{phase-cov}
\begin{equation}
\lambda_1(t)=\exp\left\{-\frac 12 \Big[\Gamma_+(t)+\Gamma_-(t)+\Gamma_3(t)\Big]\right\},\qquad
\lambda_3(t)=\exp\Big[-\Gamma_+(t)-\Gamma_-(t)\Big],
\end{equation}
\begin{equation}
\lambda_\ast(t)=\exp\Big[-\Gamma_+(t)-\Gamma_-(t)\Big]\int_0^t
\Big[\gamma_+(\tau)-\gamma_-(\tau)\Big]\exp\Big[\Gamma_+(\tau)+\Gamma_-(\tau)\Big]
\der\tau,
\end{equation}
with $\Gamma_\mu(t)=\int_0^t\gamma_\mu(\tau)\der\tau$. Finally, the phase-covariant dynamical map is P-divisible if and only if \cite{phase-cov}
\begin{equation}\label{PCC_Pdiv}
\gamma_\pm(t)\geqslant0,\qquad \gamma_3(t)+\sqrt{\gamma_+(t)\gamma_-(t)}\geqslant0,
\end{equation}
whereas CP-divisibility is recovered for $\gamma_\pm(t)\geqslant0$, $\gamma_3(t)\geqslant0$.

For non-unital channels, the problem of generator reparametrization becomes more involved. This is due to the conditions in eq. (\ref{PCC_Pdiv}) being non-linear in the decoherence rates. Consider the following parametrization of the generator,
\begin{equation}\label{PCC_gen}
L_t=\sum_{\alpha=1}^3\beta_\alpha(t) G_\alpha,
\end{equation}
where
\begin{equation}
    G_1=\frac 14 (\mathcal{L}_++\mathcal{L}_--2\mathcal{L}_3),\qquad
    G_2=\frac 14 (\mathcal{L}_++\mathcal{L}_3),\qquad
    G_3=\frac 14 (\mathcal{L}_-+\mathcal{L}_3),
\end{equation}
and
\begin{equation}
    \beta_1(t)=\gamma_+(t)+\gamma_-(t)-\gamma_3(t),\qquad
    \beta_2(t)=3\gamma_+(t)-\gamma_-(t)+\gamma_3(t),\qquad
    \beta_3(t)=-\gamma_+(t)+3\gamma_-(t)+\gamma_3(t).
\end{equation}
All $G_k$ are generated by completely copositive maps, whereas the maps corresponding to $G_2$ and $G_3$ are also completely positive.

\begin{theorem}
    If $\Lambda_t$ is generated via $L_t$ as in eq. (\ref{PCC_gen}) with $\beta_k(t)\geqslant0$, then it is D-divisible.
\end{theorem}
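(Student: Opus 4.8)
The plan is to reduce the statement to the generator characterization of D-divisibility recalled in eq.~(\ref{eq:LtDdivisible}): it suffices to produce a Hermitian $H_t$ and a \emph{decomposable} map $\varphi_t$ with
\begin{equation}
L_t = -i\comm{H_t}{\cdot} + \varphi_t - \frac 12 \acomm{\varphi_t^\dual(\mathbb{I})}{\cdot}.
\end{equation}
The argument then splits into two stages. First I would rewrite each generator $G_\alpha$ on its own as a dissipator of this shape, driven by a completely copositive (hence decomposable) map $\varphi_\alpha$ with $H_\alpha=0$. Second, I would assemble $L_t=\sum_{\alpha}\beta_\alpha(t)G_\alpha$ and invoke the fact, already used in the proof of Theorem~\ref{thm:GeneralizedPauli}, that a nonnegative combination of decomposable maps is again decomposable.

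For the first stage I would read off the candidate $\varphi_\alpha$ from the ``sandwich'' part of $G_\alpha$, that is, collect the terms of the form $A\rho A^\hadj$ and set aside the anticommutator. Using $\sigma_\pm\rho\sigma_\mp$ from $\mathcal{L}_\pm$ and $\frac 14\sigma_3\rho\sigma_3$ from $\mathcal{L}_3$, this gives for instance
\begin{equation}
\varphi_1(\rho)=\frac 14\left(\sigma_+\rho\sigma_-+\sigma_-\rho\sigma_+-\frac 12\,\sigma_3\rho\sigma_3\right),
\end{equation}
together with the analogous $\varphi_2,\varphi_3$ assembled from $\sigma_\pm$ and $\sigma_3$ with only nonnegative weights. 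The bookkeeping step is to confirm that $\varphi_\alpha^\dual(\mathbb{I})$ reproduces exactly the anticommutator part of $G_\alpha$, so that $G_\alpha=\varphi_\alpha-\frac 12\acomm{\varphi_\alpha^\dual(\mathbb{I})}{\cdot}$; for $\varphi_1$ one checks $\varphi_1^\dual(\mathbb{I})=\frac 18\mathbb{I}$, matching the residual $-\frac 18\rho$ in $G_1$. The substantive point is the positivity type of each $\varphi_\alpha$: since $\varphi_2$ and $\varphi_3$ are sums of maps $A\rho A^\hadj$ with nonnegative coefficients, their Choi matrices are manifestly positive and they are CP, whereas for $\varphi_1$ the coefficient $-\frac 12$ in front of $\sigma_3\rho\sigma_3$ destroys complete positivity, and one must verify instead that $\theta\circ\varphi_1$ is CP, i.e.~that the Choi matrix of $\theta\circ\varphi_1$ is positive semidefinite.

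Granting that each $\varphi_\alpha$ is decomposable, I would set $\varphi_t:=\sum_{\alpha=1}^{3}\beta_\alpha(t)\varphi_\alpha$. Linearity of the dual and of the anticommutator gives $\varphi_t^\dual(\mathbb{I})=\sum_\alpha\beta_\alpha(t)\varphi_\alpha^\dual(\mathbb{I})$, whence
\begin{equation}
L_t=\sum_{\alpha=1}^{3}\beta_\alpha(t)G_\alpha=\varphi_t-\frac 12\acomm{\varphi_t^\dual(\mathbb{I})}{\cdot}.
\end{equation}
Because $\beta_\alpha(t)\geqslant 0$ and the decomposable maps form a convex cone, $\varphi_t$ is a nonnegative combination of decomposable maps and hence itself decomposable. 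Feeding this $\varphi_t$ (with $H_t=0$) into eq.~(\ref{eq:LtDdivisible}) then certifies D-divisibility of $\Lambda_t$.

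I expect the single genuine obstacle to be the complete-copositivity check for $\varphi_1$, since the negative weight on $\sigma_3\rho\sigma_3$ removes the easy CP argument that settles $\varphi_2$ and $\varphi_3$; everything else---matching the anticommutators, the linearity of the reparametrization, and closure of the decomposable cone under nonnegative combinations---is routine once that one Choi matrix has been shown to be positive semidefinite.
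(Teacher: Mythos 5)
Your overall strategy is the right one and is essentially the paper's own (the paper does not print a proof; the sentence preceding the theorem, asserting that each $G_k$ is generated by a completely copositive map, is exactly the two-stage argument you outline, combined with the cone property of decomposable maps and the characterization \eqref{eq:LtDdivisible}). However, the one step you yourself flag as the genuine obstacle fails for the map you wrote down. Expanding $G_1=\tfrac14(\mathcal{L}_++\mathcal{L}_--2\mathcal{L}_3)$ and using $\sigma_-\sigma_++\sigma_+\sigma_-=\mathbb{I}$ gives
\begin{equation*}
G_1(\rho)=\tfrac14\left(\sigma_+\rho\sigma_-+\sigma_-\rho\sigma_+-\tfrac12\sigma_3\rho\sigma_3\right)-\tfrac18\rho ,
\end{equation*}
so your bookkeeping is correct, but your candidate $\varphi_1(\rho)=\tfrac14(\sigma_+\rho\sigma_-+\sigma_-\rho\sigma_+-\tfrac12\sigma_3\rho\sigma_3)$ is not decomposable: it is not even a positive map, since $\varphi_1(E_{11})=\tfrac14 E_{22}-\tfrac18 E_{11}$ has a negative eigenvalue. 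Consequently the deferred check cannot succeed --- the Choi matrix of $\theta\circ\varphi_1$ has the diagonal entries $-\tfrac18$ inherited from $\varphi_1(E_{11})$ and $\varphi_1(E_{22})$, so $\theta\circ\varphi_1$ is not CP, and no splitting of $\varphi_1$ into CP plus coCP pieces exists.

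The missing idea is the gauge freedom in the representation \eqref{eq:LtDdivisible}: replacing $\varphi$ by $\varphi+c\,\id{}$ with $c\in\mathbb{R}$ leaves $\varphi-\tfrac12\acomm{\varphi^{\dual}(\mathbb{I})}{\cdot}$ unchanged, because the added $c\rho$ is exactly cancelled by the shift $c\,\mathbb{I}$ in $\varphi^{\dual}(\mathbb{I})$. You must use this freedom to move the scalar term $+\tfrac18\rho$ (coming from $-2\mathcal{L}_3$) into the ``sandwich'' part rather than into the anticommutator. Taking $c=\tfrac18$ yields
\begin{equation*}
\widetilde{\varphi}_1(\rho)=\tfrac14\left(\sigma_+\rho\sigma_-+\sigma_-\rho\sigma_+-\tfrac12\sigma_3\rho\sigma_3+\tfrac12\rho\right)=\tfrac14\,(\sigma_1\rho\sigma_1)^{\transpose},
\end{equation*}
which is manifestly completely copositive, with $\widetilde{\varphi}_1^{\dual}(\mathbb{I})=\tfrac14\mathbb{I}$ and $G_1=\widetilde{\varphi}_1-\tfrac12\acomm{\widetilde{\varphi}_1^{\dual}(\mathbb{I})}{\cdot}$. (This mirrors what the paper does explicitly in the Pauli-channel section, where the $G_\alpha$ are built from maps of the form $(\sigma_i\rho\sigma_i)^{\transpose}$.) With this corrected $\widetilde{\varphi}_1$, and with $\varphi_2,\varphi_3$ CP as you argue, the rest of your proof --- linearity, closure of the decomposable cone under nonnegative combinations, and the appeal to \eqref{eq:LtDdivisible} with $H_t=0$ --- goes through.
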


The above condition is only sufficient, being equivalent to
\begin{equation}
    \gamma_\pm(t)\geqslant0,\qquad \max\{\gamma_+(t)-3\gamma_-(t),\gamma_-(t)-3\gamma_+(t)\}
    \leqslant \gamma_3(t)\leqslant \gamma_+(t)+\gamma_-(t),
\end{equation}
Our results indicate that, for non-unital maps, the generators cannot be re-parameterized in such a way that $\beta_k(t)\geqslant0$ reproduce the necessary and sufficient conditions for D-divisibility. However, this allows for sufficient conditions that are clearly seen at the level of the generator itself.

\section{Final remarks}

We characterized families of D-divisible qudit dynamical maps characterized via $d+1$ time-dependent parameters. Conveniently, decomposable divisibility of a dynamical map is already visible on the level of the time-local generator. The presented examples support the claim that our sufficient D-divisibility conditions are more restrictive than the known P-divisibility conditions. We established problems that arise when applying our formalism to maps that are non-unital. The next step toward a further characterization of non-Markovian dynamics is replicating analogical properties for wider classes of generators. An important open question is how to efficiently deal with evolutions that go beyond bistochastic. The loss of linear dependence between the divisibility conditions and the decoherence rates requires developing alternative methods.

\section{Acknowledgments}

This research was funded in whole or in part by the National Science Centre, Poland, Grant number 2021/43/D/ST2/00102. For the purpose of Open Access, the author has applied a CC-BY public copyright license to any Author Accepted Manuscript (AAM) version arising from this submission.

\bibliographystyle{beztytulow2}
\bibliography{DecomposableDynamicsBib2}

\begin{thebibliography}{10}
\providecommand{\url}[1]{\texttt{#1}}
\providecommand{\urlprefix}{URL }
\providecommand{\eprint}[2][]{\url{#2}}

\bibitem{Sabrina}
D.~Chru\'sci\'nski and S.~Maniscalco, Phys. Rev. Lett. \textbf{112}, 120404
  (2014).

\bibitem{D-div}
K.~Szczygielski, J. Phys. A: Math. Theor. \textbf{56}, 485202 (2023).

\bibitem{EB-div}
K.~Szczygielski and D.~Chru\'{s}ci\'{n}ski, \textit{Eventually entanglement
  breaking quantum dynamics and eventual EB-divisibility} (2024),
  arXiv:2407.16583 [math-ph].

\bibitem{GKS}
V.~Gorini, A.~Kossakowski, and E.~Sudarshan, J. Math. Phys. \textbf{17}, 821
  (1976).

\bibitem{L}
G.~Lindblad, Comm. Math. Phys. \textbf{48}, 119 (1976).

\bibitem{Woronowicz1976}
S.~L. Woronowicz, Reports on Mathematical Physics \textbf{10(2)}, 165--183
  (1976).

\bibitem{Stoermer1963}
E.~St{\o}rmer, Acta Math. \textbf{110}, 233--278 (1963).

\bibitem{Lindblad1976}
G.~Lindblad, Commun. Math. Phys. \textbf{48(2)}, 119--130 (1976).

\bibitem{Gorini1976}
V.~Gorini, A.~Kossakowski, and E.~C.~G. Sudarshan, J. Math. Phys.
  \textbf{17(5)}, 821--825 (1976).

\bibitem{Szczygielski2023}
K.~Szczygielski, J. Phys. A: Math. Theor. \textbf{56(48)}, 485202 (2023).

\bibitem{Filip}
D.~Chru\'{s}ci\'{n}ski and F.~A. Wudarski, Phys. Lett. A \textbf{377}, 1425
  (2013).

\bibitem{Filip2}
D.~Chru\'{s}ci\'{n}ski and F.~A. Wudarski, Phys. Rev. A \textbf{91}, 012104
  (2015).

\bibitem{Ruskai}
M.~Nathanson and M.~B. Ruskai, J. Phys. A: Math. Theor. \textbf{40}, 8171
  (2007).

\bibitem{mub_final}
D.~Chru{\'s}ci{\'n}ski and K.~Siudzi{\'n}ska, Phys. Rev. A \textbf{94}, 022118
  (2016).

\bibitem{Wootters}
W.~K. Wootters and B.~D. Fields, Ann. Phys. \textbf{191}, 363 (1989).

\bibitem{MAX}
S.~Bandyopadhyay, P.~Boykin, V.~Roychowdhury, and F.~Vatan, Algorithmica
  \textbf{34}, 512 (2002).

\bibitem{SICMUB_channels}
K.~Siudzi\'{n}ska, Phys. Rev. A \textbf{110}, 012440 (2024).

\bibitem{ICQC}
K.~Siudzi\'{n}ska and D.~Chru\'{s}ci\'{n}ski, J. Math. Phys. \textbf{59},
  033508 (2018).

\bibitem{phase-cov-PRL}
A.~Smirne, J.~Ko{\l}ody\'{n}ski, S.~F. Huelga, and R.~Demkowicz-Dobrza\'{n}ski,
  Phys. Rev. Lett. \textbf{116}, 120801 (2016).

\bibitem{phase-cov}
S.~N. Filippov, A.~N. Glinov, and L.~Lepp{\"{a}}j{\"{a}}rvi, Lobachevskii J.
  Math. \textbf{41}, 617--630 (2020).

\bibitem{PC1}
J.~Lankinen, H.~Lyyra, B.~Sokolov, J.~Teittinen, B.~Ziaei, and S.~Maniscalco,
  Phys. Rev. A \textbf{93}, 052103 (2016).

\bibitem{PC3}
J.~F. Haase, A.~Smirne, J.~Ko{\l}ody\'{n}ski, R.~Demkowicz-Dobrza\'{n}ski, and
  S.~F. Huelga, New J. Phys. \textbf{20}, 053009 (2018).

\end{thebibliography}

\end{document}